\newcommand\independent{\protect\mathpalette{\protect\independenT}{\perp}}
\def\independenT#1#2{\mathrel{\rlap{$#1#2$}\mkern2mu{#1#2}}}
\newcommand{\rom}[1]{\uppercase\expandafter{\romannumeral #1\relax}}
\newtheorem{definition}{Definition}
\newtheorem{corollary}{Corollary}
\newtheorem{proposition}{Proposition}
\def\textiid{i.i.d.\@\xspace}
\newcommand\iid{\ifmmode\text{ i.i.d. } \else \textiid \fi}
\title{Lower Bounds on the Error Probability\\ for Invariant Causal Prediction}
\begin{document}
%\ninept

\maketitle

\begin{abstract}
\vspace{-0.5em}
  It is common practice to collect observations of feature and response pairs from different environments. A natural question is how to identify features that have consistent prediction power across environments. The invariant causal prediction framework proposes to approach this problem through invariance, assuming a linear model that is invariant under different environments. In this work, we make an attempt to shed light on this framework by connecting it to the Gaussian multiple access channel problem. Specifically, we incorporate optimal code constructions and decoding methods to provide lower bounds on the error probability. We illustrate our findings by various simulation settings.
\end{abstract}

\begin{keywords}
Invariance, Gaussian multiple access channel, error exponent, lower bound.
\end{keywords}
\vspace{-0.5em}
\section{Introduction}
\vspace{-0.5em}

% We start with a brief review of the \textsf{ICP} and Gaussian \textsf{MAC} settings and then connect them by discussing their similarities. 
%\subsection{Invariant Causal Prediction (\textsf{ICP})}

Invariant causal prediction (\textsf{ICP})~\cite{peters2016causal} is a recently proposed framework on linear models for selecting features that are stable across different environments. It is motivated by the idea of invariance, which is often referred to as modularity, autonomy \cite{haavelmo1944probability, aldrich1989autonomy,hoover1990logic,pearl2000models, scholkopf2012causal}, or stability~\cite{dawid2010identifying,pearl2000models}. The main assumption for invariant prediction is that \emph{there exists a linear model invariant across environments}, with an unknown noise distribution and arbitrary dependence among predictors. Roughly speaking, it assumes for all environments $e\in\mathcal{E}$ there exists $Y^e =  X^e\gamma + Z^e$, where $Z^e$ is distributed according to an \emph{unknown} $F$ (that does not depend on $e$) and $Z^e$ is \emph{independent} of a set of true predictors $X^e_{S^*}$ indexed by set $S^*\subseteq\{1,..., m\}$ (with $|S^*|=k$). With a total of $n$ observations from different environments, the goal of \textsf{ICP} is to approximate $S^*$ in a computationally efficient manner.

There is a rich line of works on support recovery from an information-theoretic perspective~\cite{jin2011limits,scarlett2016limits} (and we only list the most relevant ones due to space limit). The \textsf{ICP} framework is different from the traditional support recovery settings in that (a) the set of predictors may not be unique (see Discussion in~\cite{peters2016causal}), (b) the number of potential predictors $m$ may not grow with number of observations $n$, and (c) it allows for arbitrary dependencies among features (while in support recovery settings, the measurement matrix is usually assumed to be \iid~\cite{jin2011limits,scarlett2016limits}; partly because the \iid codebook is optimal for the Gaussian multiple access channel when $m$ is exponential in $n$). 

In this work, we leverage information-theoretic techniques for Gaussian point-to-point and multiple access channels to shed light on the character of \textsf{ICP}. This connection enables us to use power constraints to differentiate environments and apply optimal codeword constructions and decoding methods to obtain lower bounds on the error probability. To make this connection possible, we consider the simplest setting that guarantees a unique support such that it becomes reasonable to discuss the lower bound on the error probability of support recovery; at the same time we try to keep the dependency between variables as general as possible. Specifically, we assume that both Gaussian noise and $k$ are known, and study cases when the channel gains are known or need to be estimated. We focus mainly on the zero-rate setting, where the number of potential predictors $m$ does not grow with the number of samples $n$~\footnote{To simplify notation, we skip settings when $m$ is polynomial in $n$ (or anything slower than exponential), which also belong to the zero-rate setup.}. In the zero-rate setting, the optimal code is known to be the  simplex code (detailed in Section~\ref{sec:simplex}), which we use to understand the impact of different environments on \textsf{ICP}. The analysis for the positive-rate case (when $m$ is exponential in $n$) is possible by adopting Fano's inequality~\cite{cover1999elements} and is discussed briefly in Section~\ref{sec::pos_rate}.

\vspace{-0.5em}
\section{Background and Problem Setting}
\vspace{-0.5em}
\subsection{Methods for Invariant Causal Prediction}
\vspace{-0.5em}

Consider a setting in which we have different experimental conditions and let $\mathcal{E}$ denote the index set of all possible environmental conditions. In each $e \in \mathcal{E}$, let the feature vector $X^e = (X_1^e,\dots, X^e_m)^\top\in\mathbb{R}^{m\times 1}$ and the response $Y^e\in \mathbb{R}$ form a joint distribution $(X^e,Y^e)$. For a set $S \subseteq \{1,\dots ,m\}$, $X_S^e$ denotes a random vector containing all variables $X_i^e$, $i \in S$. In this work, we focus on the simplest setting where there are only \emph{two environments}, i.e., $|\mathcal{E}|=2$. 

The main assumption for \textsf{ICP} is that there exist a vector of coefficients $\gamma = (\gamma_1,\dots,\gamma_m)^\top$ with support $S^* \coloneqq \{i:\gamma_i \neq 0 \}$ that, for both environments, satisfies: $X^e$  has an arbitrary distribution and
 \begin{equation}\label{equ:inv}
      Y^e =  X^e\gamma + Z^e, Z^e \sim F \text{ and } Z^e \independent X^e_{S^*},  
 \end{equation}
 where the (zero mean and finite variance) noise term $Z^e$ follows the same distribution $F$ across both environments~\footnote{We skip the intercept term in the model for simplicity of presentation.}. Variables in the vector $X_{S^*}^e$ are referred to as causal predictors, and the number of causal predictors (i.e., the cardinality of the support $S^*$) is $|S^*| = k$. Upon receiving $n$ observations of $(X^e,Y^e)$ and in each environment, the goal of \textsf{ICP} is to recover the support $S^*$.
 
Two \textsf{ICP} methods are proposed in~\cite{peters2016causal}, known both here and in the original work as Method~I and Method~II. The idea behind these and their variants (e.g.~~\cite{heinze2018invariant}) is to iterate over all subsets of variables $X_S$, $S \in \{1,\dots,m\}$ and test each subset for invariance. It is shown in~\cite{peters2016causal} that, with high probability, the resulting intersection of invariant sets will be a subset of $S^*$. Generally, only the test for invariance differentiates \textsf{ICP} algorithms. Method~I fits linear models in each environment and uses a test on regression coefficients to assert invariance. Method~II fits a linear model and, for each environment, tests the mean and variance of the residuals to determine invariance. 

\vspace{-0.5em}
\subsection{Error Exponent and Gaussian \textsf{MAC}} \label{subsec:brief_rev}
\vspace{-0.5em}

\noindent\underline{\bf Error exponent for the Gaussian channel}. We briefly review some classical results on error exponent from Shannon~\cite{shannon1959probability} for the point-to-point Gaussian channel. Consider a Gaussian point-to-point channel in which the sender has access to a codebook $C = \{c_1,c_2,\dots,c_m\}$ for $m$ messages, where $c_j \in \mathbb{R}^n$ and $m$ is the number of codewords in $C$. To transmit information, the sender first chooses a codeword and then sends the $i$-th element of the chosen codeword at transmission time $i$ as the input symbol $X_{i}$. We assume the \emph{peak energy constraint} $\sum_{i=1}^n x_i(l)^2  \leq nP$ for all messages $1\le l\le m$; this constraint allows us to model different environments in a natural manner. The receiver obtains $Y_i = h X_{i} + Z_i$, where $h$ is the channel gain, and the $Z_i$ are each assumed to be i.i.d. $N(0,\sigma^2_z)$. After $n$ transmissions, the receiver needs to determine which codeword in codebook $C$ was sent.

\vspace{-0.5em}
\begin{definition} [\bfseries Error exponent] \label{def:err_exp}
We define error exponent as the rate of decay for the error probability of the optimal sequence of $(m,n)$ codes. i.e.,
\begin{equation}
    E_m \coloneqq \limsup\limits_{n\rightarrow\infty} -\frac{1}{n}\ln P_e^*(m,n),
\end{equation}
where $P_e^*(m,n)$ denotes the best error probability over all $(m,n)$ codes. 
\end{definition}

 As shown by Shannon in~\cite[Equation~(82)]{shannon1959probability}, when $h = 1$, a lower bound on $P_e^*(m,n)$ for communicating using a codebook of $m$ codewords over a point-to-point channel is 
\begin{equation} \label{equ:opt_err}
    P_e^*(m,n) \ge  \frac{1}{2}\Phi\left(-\sqrt{\frac{m}{4(m-2)} \cdot\frac{nP}{2}}\right),
\end{equation}
where $\Phi(x)$ denotes the cumulative distribution function (CDF) of the standard Gaussian distribution. Accordingly, the error exponent is upper bounded by $ \frac{m}{4(m-1)}P$ (which follows from~\cite[Equation~(81)]{shannon1959probability} as (82) therein is a slightly loose bound). In fact, it is well-known that $E_{m} = \frac{m}{4(m-1)}P$ is optimal for zero-rate settings (which include the case of interest in this paper, i.e., when $m$ is fixed and does not grow with $n$), since it can be achieved using a regular simplex code on the sphere of radius $\sqrt{nP}$ along with minimum distance decoding. The zero-rate error exponent and simplex code play a fundamental role in communication problems such as the Gaussian channel with noisy feedback~\cite{xiang2013gaussian}. To the best of our knowledge, the optimal error exponent is unknown for the Gaussian multiple access channel under the zero-rate setting. 
\smallskip

\noindent\underline{\bf Gaussian \textsf{MAC} and support recovery}. 
One variant of the Gaussian multiple access channel (\textsf{MAC}) can be formulated similarly, where all the $k$ senders \emph{share the same codebook} $C = \{c_1,c_2,\dots,c_m\}$. The receiver obtains
\begin{equation}
  Y_i = h_1X_{1,i} + h_2X_{2,i} + \dots + h_kX_{k,i} +Z_i,
\end{equation}
where $h_l$ is the channel gain for sender $l$, and $X_{l,i}$ is the input symbol from sender $l$ at time $i$. From now on, we will simply refer to this common codebook setting as the Gaussian \textsf{MAC}. 

It is well-known that this Gaussian \textsf{MAC} setting and the support recovery problem in linear models are equivalent (apart from unknown channel gains $h_l$'s needing to be estimated properly~\cite{jin2011limits,scarlett2016limits}). We list the key similarities between these two and \textsf{ICP} as follows (see also details from~\cite{jin2011limits}): (1) The $k$ senders relate to the elements in $S^*$ or, similarly, the non-zero coefficients in $\gamma$; (2) The nonzero entries in $\gamma$ can be seen as channel gains; (3) The goal of codeword recovery for a Gaussian \textsf{MAC} is to determine the index of the sent codeword in the codebook; similarly, the support recovery problem must determine the support $S^*$ of the coefficients $\gamma$. 

\vspace{-0.5em}
\subsection{Connecting \textsf{ICP} with the Gaussian \textsf{MAC}}
\vspace{-0.5em}

% the following connections between an \textsf{ICP} setting and the Gaussian MAC is unique, 
% \begin{enumerate}
% \setlength\itemsep{0em}
%     \item Supposing there exists multiple power constraints over different parts of the codewords in $C$, these sections of codeword can be viewed as different environments in an \textsf{ICP} setting.

% \end{enumerate}
Even though the three similarities mentioned above are shared between the Gaussian \textsf{MAC} and \textsf{ICP}, the two problems, unlike the support recovery and codeword recovery in a Gaussian \textsf{MAC}, are not equivalent. This is because (1) the notion of environment and invariance in \textsf{ICP}; (2) \textsf{ICP} is very general in its assumptions. e.g, the distributions $X^e_i$ and $Z^e$ are arbitrary. Careful assumptions must be made and resulting differences carefully thought-out and accounted for in order to shed light on \textsf{ICP} via the Gaussian \textsf{MAC}. We outline several differences mainly due to the generality of \textsf{ICP}. 

\vspace{-0.5em}
 \begin{enumerate}
 \setlength\itemsep{-0.25em}
     \item Codewords in a Gaussian MAC are often assumed to be independently distributed. In an \textsf{ICP} setting, however, dependencies between predictor variables are allowed.  
     %Assuming all $X^e_i$ are independently distributed, $X^e$ can be viewed as a codebook while $X_i$ as codewords.
    %  \item Conventionally, each sender in a Gaussian \textsf{MAC} has a unique codebook. To make this more inline with an \textsf{ICP} setting, we assume that each sender shares the same codebook. This further solidifies the connection that the predictors $X_s$ can be viewed as a single codebook. In this case, the transmission rate of all senders is identical. That is, $R_1 = R_2 = \dots = R_k = R$. This difference, however, can be ignored when dealing with a channel in which $k=1$ as, the channel is simplified to a single Gaussian point-to-point channel. 
     \item In a Gaussian \textsf{MAC}, channel gains are known (e.g. from pilot sequences or feedback). In the case of ICP, the non-zero coefficients in $\gamma$ are actually unknown and need to be estimated. 
     \item In \textsf{ICP}, the number of invariant causal predictors, $k$, is not known. This is in contrast to a Gaussian MAC where the number of senders is always known.
     \item In a Gaussian \textsf{MAC}, the noise distribution is known to be \iid $N(0,\sigma_z^2)$. In \textsf{ICP}, the distribution of the noise can be arbitrary. 
 \end{enumerate}
 \vspace{-0.5em}
 
Because of the generality of \textsf{ICP}, $S^*$ might not be unique (see Discussion in~\cite{peters2016causal}), which is why most \textsf{ICP} methods iterate over all subsets then intersect accepted sets. This is a sharp contrast with both Gaussian \textsf{MAC} and \textsf{ICP}. Therefore, as a first attempt to connect Gaussian \textsf{MAC} and \textsf{ICP}, we need to restrict ourselves to some tractable settings where $S^*$ is unique. We study the following class of \textsf{ICP} problems: (1) \textsf{ICP} noise is known and distributed according to $\mathcal{N}(0,\sigma^2_z)$; (2) the non-zero coefficients in $\gamma$ are known (this can be relaxed in the next section), and (3) $k$ is known. 

These constraints guarantee a unique $S^*$, enabling us to lower bound the error probability of recovering $S^*$ by leveraging information-theoretic techniques. We will mainly focus on a natural setting for \textsf{ICP}, the zero-rate case, when $m$ is fixed. Then, we present analysis for the positive-rate case, when $m$ grows exponentially with $n$. It is noteworthy that even these two settings are highly non-trivial and we report only partial theoretical solutions with heuristic algorithms.

% For example, assuming a single sender ($k=1$) in a Gaussian \textsf{MAC} would be analogous to $Y^e$ being generated using a single predictor variable in an \textsf{ICP} setting. Thus, $S^*$ would have only one element. 

%\begin{enumerate}
%    \item When $X^e$ is random and variables $X_i^e$ are independently distributed. 
%    \item When $X^e$ is deterministic. 
%\end{enumerate}

% When $X^e$ is random, we must also assume the variables $X_i^e$ are independently distributed. This excludes many settings in which a Gaussian \textsf{MAC} and \textsf{ICP} have no analogy. For example, the cases when a predictor $X_i$ is itself a causal predictor of $X_j$. When $X^e$ is deterministic, such cases are excluded solely due to the deterministic nature of $X_e$.

% \begin{remark}
% In such settings, codewords and decoding methods can be designed such that $S^*$ is be uniquely recovered with high probability in the same way a codeword is recovered in the Gaussian MAC setting. In said codeword recovery approaches, knowledge of environments is inconsequential as it is sufficient to pool all $X_i^e$ across both environments. 
% \end{remark}

\vspace{-0.5em}
\section{Lower Bounds: Zero-Rate Case} \label{sec::zero_rate}
\vspace{-0.5em}

 In an \textsf{ICP} setting, it is not natural to assume the number of predictors need grow with the sample size. The more relevant setting is the zero-rate case, where $m$ is fixed and does not grow with $n$. Thus, this is the primary setting explored in this work. Additionally, from an algorithmic perspective, it would quickly become computationally infeasible to run \textsf{ICP} if $m$ grows exponentially as most \textsf{ICP} methods would iterate over an exponentially increasing number of subsets.
 
In order to leverage the lower bound for the Gaussian point-to-point channel in~\eqref{equ:opt_err}, we start with the setting where the channel gain is known and $k=1$ (the unknown channel gain setting is covered in Section~\ref{sec:unknown}). Suppose that~\eqref{equ:inv} holds with two environments ($n/2$ samples per environment),  $Z^e \sim \mathcal{N}(0,1)$, and a channel gain of one. We are now ready to present a lower bound on the error probability of recovering $S^*$ in \textsf{ICP}. 
\vspace{-0.3em}
\begin{proposition} \label{the:lower_b}
Suppose each predictor $X_i^e$ for $i \in \{1,\dots m\}$ is a deterministic vector in $\mathbb{R}^n$ that obeys the power constraint $\sum x_i^2 \leq nP/2 $ for environment $1$ and $\sum x_i^2 \leq n(P+d)/2 $ for environment $2$. 
The error probability of correctly recovering $S^*$ can be lower bounded as follows,
\begin{equation} \label{equ:ther1}
      \frac{1}{2}\Phi\left(-\sqrt{\frac{m}{4(m-2)}\cdot\frac{n(P+d/2)}{2}}\;\right).
\end{equation}
\end{proposition}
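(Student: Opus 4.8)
The plan is to reduce support recovery in this restricted \textsf{ICP} problem to codeword recovery over a point-to-point Gaussian channel, and then apply Shannon's bound~\eqref{equ:opt_err}. Since $k=1$, the support is a single index $i^\star\in\{1,\dots,m\}$, and model~\eqref{equ:inv} with unit channel gain and $Z^e\sim\mathcal{N}(0,1)$ becomes $Y = X_{i^\star} + Z$, where $Y\in\mathbb{R}^n$ stacks the $n/2$ responses from the two environments, $X_{i^\star}\in\mathbb{R}^n$ stacks the corresponding observations of predictor $i^\star$, and $Z\sim\mathcal{N}(0,I_n)$. A procedure that recovers $S^*$ sees $Y$ together with the deterministic vectors $X_1,\dots,X_m$ and must output the index $i^\star$; this is precisely the decoding problem for the point-to-point Gaussian channel whose codebook is $C=\{X_1,\dots,X_m\}$ and whose noise has variance one.

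Step one is to merge the two per-environment energy constraints into a single peak-energy constraint on $C$. Writing $X_i=(X_i^{(1)},X_i^{(2)})$ with $X_i^{(e)}\in\mathbb{R}^{n/2}$ the block contributed by environment $e$, the hypotheses give $\|X_i^{(1)}\|^2\le nP/2$ and $\|X_i^{(2)}\|^2\le n(P+d)/2$, so that
\[
  \sum_{j=1}^{n} X_i(j)^2 \;=\; \|X_i^{(1)}\|^2+\|X_i^{(2)}\|^2 \;\le\; \frac{nP}{2}+\frac{n(P+d)}{2} \;=\; n\Big(P+\tfrac{d}{2}\Big)
\]
for every $i\in\{1,\dots,m\}$. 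Hence $C$ obeys the peak-energy constraint of~\eqref{equ:opt_err} with $P$ replaced by $P'\coloneqq P+d/2$.

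Step two is to invoke~\eqref{equ:opt_err}. Let $P_e$ denote the error probability of the best decoder for the fixed codebook $C$ (equivalently, the best recovery rule for $S^*$ given these predictor vectors). Since $P_e^*(m,n)$ is the smallest error probability over all $(m,n)$ codes meeting the same constraint $nP'$ — and $C$ is one such code — we have $P_e\ge P_e^*(m,n)$, and therefore
\[
  P_e \;\ge\; \frac{1}{2}\,\Phi\!\left(-\sqrt{\frac{m}{4(m-2)}\cdot\frac{nP'}{2}}\,\right) \;=\; \frac{1}{2}\,\Phi\!\left(-\sqrt{\frac{m}{4(m-2)}\cdot\frac{n(P+d/2)}{2}}\,\right),
\]
which is~\eqref{equ:ther1}.

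The step that I expect to require the most care is the reduction in the first paragraph: one should check that, under these restricted assumptions, the optimal \textsf{ICP} recovery rule for the data $(Y,X_1,\dots,X_m)$ coincides with the optimal decoder for the induced point-to-point channel (so that the \textsf{ICP} error cannot be smaller than $P_e^*(m,n)$), and that the message-prior convention behind Shannon's bound — uniform over the $m$ candidate supports, or the maximal error probability — matches the sense in which $P_e$ is defined here. Once this is granted, the energy bookkeeping and the substitution $P\mapsto P+d/2$ are routine, and the requirement $m\ge 3$ implicit in~\eqref{equ:opt_err} is inherited directly.
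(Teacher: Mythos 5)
Your proposal is correct and follows essentially the same route as the paper's proof: combine the two per-environment energy constraints into the single peak constraint $\sum_{i=1}^n x_i^2 \le n(P+d/2)$ and then apply Shannon's lower bound~\eqref{equ:opt_err} with $P$ replaced by $P+d/2$. Your version is in fact more explicit about the reduction from \textsf{ICP} support recovery (with $k=1$ and known unit gain) to point-to-point decoding, which the paper only sketches.
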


\begin{proof}
First, it is sufficient to lower bound $P_e^*(m,n)$ for the Gaussian point-to-point channel to obtain a lower bound in the \textsf{ICP} problem. This is because by definition, $P_e^*(m,n)$ corresponds to optimal (over all possible $(m,n)$ codes) error probability, which implies the best environments for support recovery in \textsf{ICP}. Furthermore, $P_e^*(m,n)$ is computed with known noise distribution and known channel gain, which are not given in \textsf{ICP}. 

Now, without loss of generality, assume n
$n$ is even. With additional power constraints on each codeword such that, for any given codeword, $\sum_{i=1}^{n/2} x_i^2 \leq nP/2$ and $\sum_{i=n/2+1}^{n} x_i^2 \leq n(P + d)/2$.
Thus, the peak energy constraint is $ \sum_{i=1}^n x_i^2  \leq n (P + d/2) $, and the result then follows directly from~\eqref{equ:opt_err}.
\end{proof}

Using the bound in~\eqref{equ:ther1}, we compare the performance of existing methods (Methods I and II) to the lower bound on the probability of error for \textsf{ICP}. The setup for this comparison is as follows. Each predictor in environment one follows a uniform distribution such that $X_i^{e_1} \sim U[0,\sqrt{P}]$. Predictors in environment two are distributed such that $X_i^{e_2} \sim U[0, \sqrt{P+d}]$. Sampling from distributions such that these ensures the constraints in Proposition~\ref{the:lower_b} are met. Unless otherwise noted, the probability of error is estimated by averaging the results over $1000$ instances for all simulated experiments.  

Results from the comparison can be seen in Figure \ref{fig:exp1}. Given the arguments in the proof of Proposition~\ref{the:lower_b}, it is not surprising to see that Method~I and Method~II behave sub-optimally. 
%There are several explanations for the suboptimal results given by Method I and Method II.  First, they do not take into account, like the bound in ~\eqref{equ:ther1}, that the coefficients and noise are known. Also, $X^e$ may not have been constructed optimally in order to achieve the lower bound. Additionally, the statistical tests themselves may not perform optimally. 
In the following section, we discuss the optimal codes and optimal procedures for recovering $S^*$ such that the lower bound in~\eqref{equ:ther1} is almost achieved (as~\cite[Equation~(82)]{shannon1959probability} is a slightly loose bound).

\begin{figure}
\centerline{\includegraphics[scale=.3]{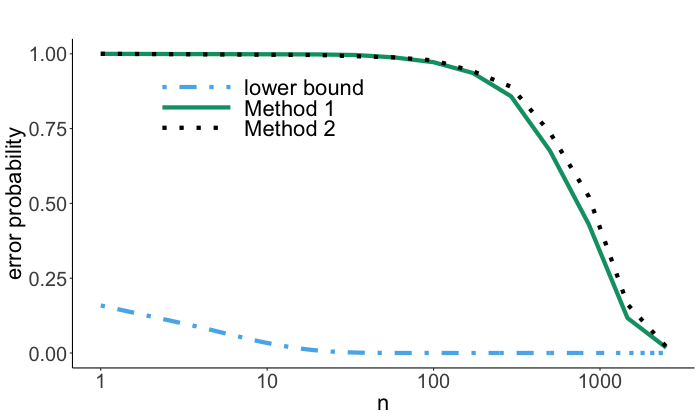}}
\caption{Robustness of linear \textsf{ICP} methods with respect to the lower bound in Proposition~\ref{the:lower_b}. $m=3$, $P=0.1$, and $d=1$.}
\vspace{-1.5em}
\label{fig:exp1}
\end{figure}

\vspace{-0.5em}
\subsection{Optimal Codes and Differences in Environment}
\label{sec:simplex}
\vspace{-0.5em}

In this section, we discuss codes used to achieve the optimal upper and lower bounds for \textsf{ICP} and how these codes relate to the distance between environments. In Proposition~\ref{the:lower_b}, we chose to model the differences in environment as codewords having different power constraints. This, however, is not the only interpretation for an environment in a \textsf{MAC} setting. For example, as we show in Section~\ref{sec::pos_rate}, environment can also be viewed as a shift in the mean of each codeword. In this section, we take a general view as to the definition of \emph{difference between environments}, and do so in terms of optimal and worst-case codes with the goal of understanding several general principles relating to environment. 

As mentioned, the bound in Proposition  \ref{the:lower_b} can be achieved using a regular simplex code on the sphere of radius $\sqrt{nP}$. In such a code, each codeword in $X$ satisfies the total power constraint $\sum_{i=1}^n X_i^2 = nP$ and is the same Euclidean distance from all other codewords. For example, when $m=3$, 
   \begin{align*}
    X_1 &= \sqrt{nP}\cdot[0,
           1,
           0,
           \cdots,
           0\;]^\top,\\
    X_2 &= \sqrt{nP}\cdot[-1/2,
           -\sqrt{3/2},
           0,
           \cdots,
           0\;]^\top,\\
    X_3 &=  \sqrt{nP}\cdot[1/2,
           -\sqrt{3/2},
           0,
           \cdots,
           0\;]^\top.
  \end{align*}
%\begin{alignat}{4}
%X_1 &= \sqrt{nP}\cdot ( &&  &&0,          &&1,0,\dots,0) \notag\\
%X_2 &= \sqrt{nP}\cdot (-&&1/&&2,-\sqrt{3/2}&&,0,\dots,0) \notag\\
%X_3 &= \sqrt{nP}\cdot ( &&1/&&2,-\sqrt{3/2}&&,0,\dots,0) \notag
%\end{alignat}

We refer to a codebook constructed in this way as $X_{\text{sim}}$.
Similarly, the worst-case code that can be constructed, assuming it satisfies the same total power constraint, is one such that all codewords are equal. That is,
    \begin{align*}
    X_i &= [\sqrt{P},
           \sqrt{P},
           \cdots,
           \sqrt{P}\;]^\top.
  \end{align*}
We refer to a codebook constructed in this manner as $X_{\text{unif}}$.

%As pointed out by Shannon in~\cite{shannon1959probability}, an optimal decoding method for such codewords as the ones mentioned is a scheme such that the probability of error for the code is minimized. 
Since Gaussian noise is symmetric, the optimal decoding scheme, referred to as the minimum distance decoding (\textsf{MDD}), is such that the codeword geometrically closest to the received signal is the decoded codeword~\cite{shannon1959probability}. Recall, that minimum distance decoding is compatible with these designed codes in the setting in question, but may not be optimal in general. 

We now discuss the environments of codebooks $X_{\text{sim}}$ and $X_{\text{unif}}$. Since $X_{\text{sim}}$ and $X_{\text{unif}}$ represent best and worst-case codes, their environments must also represent the best and worst environments, respectively. Recall the first environment in the simplex codebook, relating to its first $n/2$ rows, contains at most $m(m-1)$ non-zero elements (assuming $n>2(m-1)$).  Note any simplex code, such as the example above for $m=3$, can have have $m(m-1)$ non-zero elements, simply by rotating the simplex. The second environment in the simplex codebook, relating to its second $n/2$ rows, contains only zero elements. The portions of each codeword belonging to environment one lie on a sphere of radius $\sqrt{nP}$ while the portions of each codeword for environment two all lie at the origin. Thus, distances between the portions of the codewords belonging to environment one and that of environment two can not get any larger without first increasing the power constraint, i.e., environments in the optimal code $X_{\text{sim}}$ are as different as possible. In the case of $X_{\text{unif}}$, since all elements of each codeword are equal, the worst-case code, $X_{\text{unif}}$, has identical environments. Thus, we see that optimal codes correspond to environments that are different while worst-case codes correspond to environments that are the same.

To further examine how distance between environments effects the probability of error, we form a codebook $X_{\text{inter}}$ whose difference in environments is larger than that of $X_{\text{unif}}$ but smaller than that of $X_{\text{sim}}$ using 
\begin{equation}\label{equ::bounded_area}
    	X_{\text{inter}} = aX_{\text{unif}} + (1-a)X_{\text{sim}},
\end{equation}
where $a$ ranges from 0 to 1. The value of $a$ can be seen as a measure of how different the two environments are, where $a=1$ being the least different and $a=0$, the most. An example of the transition between $X_{\text{sim}}$ and $X_{\text{unif}}$ can be seen in Figure~\ref{fig:trans1}. When $a=0$, we have the optimal case where $X_{\text{sim}}$ is paired with minimum distance decoding. When $a=1$, we have the optimal decoding scheme but the worst code, $X_{\text{unif}}$. In this case, the minimum distance decoding reverts to simply selecting, at random, any one of the codewords. Thus, for any $n$, the probability of error converges to $\frac{m-1}{m}$. Consequently, we come to see that no increase in sample size will improve the accuracy of \textsf{ICP} when data from environment one and environment two are equivalent, which is consistent with~\cite{peters2016causal}. Aside from the point where two environments are identical, we find the error probability consistently drops as the sample size $n$ increases, suggesting that, given a large enough $n$, one can achieve a probability of error equal to zero. In other words, given large enough $n$, one can always tell the minute differences in environment such that \textsf{ICP} can recover $S^*$. 

\begin{figure}
\centerline{\includegraphics[scale=.3]{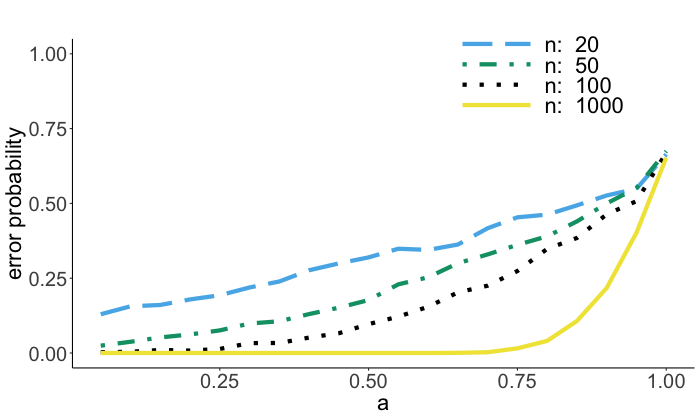}}
\caption{Error prob. using $X_{\text{inter}}$ with $m = 3$ and $P = 0.1$.}
\label{fig:trans1}
\vspace{-1.5em}
\end{figure}

\vspace{-0.5em}
\subsection{Unknown Channel Gains}
\label{sec:unknown}
\vspace{-0.5em}
Oftentimes in a Gaussian \textsf{MAC} setting, the channel gains are assumed to be known; while in \textsf{ICP}, the non-zero coefficients in $\gamma$ need to be estimated. Thus, we examine two methods for \textsf{ICP} in which the distribution of the noise is known but the coefficients in $\gamma$ are not. The first is a natural extension of minimum distance decoding where gains are estimated via ordinary least squares (\textsf{OLS}), while the second is a simple extension of a support recovery approach in \cite{jin2011limits}.

For case in which $k = 1$, a natural extension to minimum distance decoding would be to estimate the one non-zero coefficient in $\gamma$ using \textsf{OLS} for each possible $S^*$. Then, as in minimum distance decoding, the $S^*$ that produces the estimate closest to the received signal $Y^e$ is the accepted $\hat{S}^*$. The consequence of unknown coefficients in $\gamma$ is seen by comparing this procedure using the optimal code $X_{\text{sim}}$ to the lower bound in~\eqref{equ:opt_err} (see Figure~\ref{fig:unknown_hur_1}). 

\begin{figure}
\centerline{\includegraphics[scale=.30]{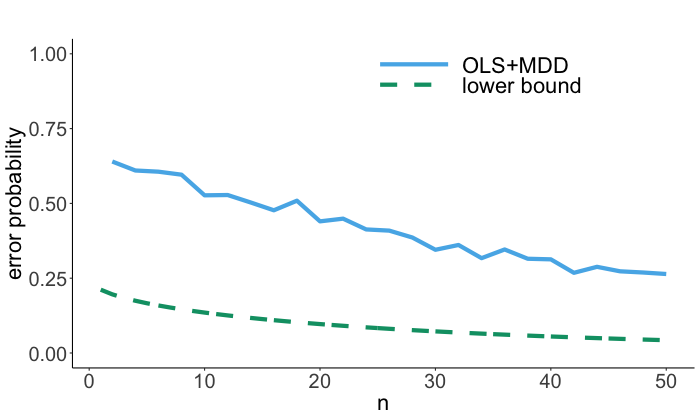}}
\caption{Error probability using $X_{\text{sim}}$ compared to the lower bound in~\eqref{equ:opt_err} where $m = 3$, $k=1$, and $P = 0.1$.}
\label{fig:unknown_hur_1}
\vspace{-1.5em}
\end{figure}

This procedure can be extended to the cases when $k > 1$ by examining the distance between the received signal $Y^e$ and its estimate $\hat{Y}_S^e = \sum_{i\in S} \hat{\gamma}_i X^e_{i}$ for all subsets $S \subseteq \{1,\dots,m\}$ such that $|S|=k$. The estimate $\hat{S}^*$ is the subset $S$ belonging to the $\hat{Y}_S^e$ closest to $Y^e$. We refer to this procedure, which is outlined in Algorithm~\ref{alg:ICP_decode}, as \textsf{OLS}+\textsf{MDD}. If we assume $k=1$, and that the gains are known (i.e., \textsf{OLS} is unnecessary), \textsf{OLS}+\textsf{MDD} is reduced to minimum distance decoding. While likely not optimum for $k \geq 1$, as minimum distance decoding is for $k=1$, \textsf{OLS}+\textsf{MDD} may provide a good estimate of $S^*$ under unknown gains. Similarly, the optimal code for $k=1$, $X_{\text{sim}}$, is likely not optimal for $k \geq 2$. However, we report results using $X_{\text{sim}}$ so as to compare results with the optimal $k=1$ case.
\vspace{-1em}
\begin{algorithm}
\caption{OLS+MDD}\label{alg:ICP_decode}
\hspace*{\algorithmicindent} \textbf{Input:} Received signal $Y^e$, codebook $X^e$, and number of causal predictors $k$

\hspace*{\algorithmicindent}
\textbf{Output:} $\hat{S}^*$
\begin{algorithmic}
\For{every subset of variables $X^e_S$ in $X^e$ such that $|S|=k$}
\State Estimate non-zero coefficients in $\gamma$ assuming $S^*=S$ using \textsf{OLS} estimates
\State Predict $\hat{Y}^e_S = \sum_{i\in S} \hat{\gamma}_i X^e_{i}$
\State Compute $R_S=|| Y^e-\hat{Y}^e_S||_2$
\EndFor
\State  Output: The $S$ associated with the smallest $R_S$
\end{algorithmic}
\end{algorithm}
\vspace{-1em}

Results for simulations done using Algorithm~\ref{alg:ICP_decode} and $X_{\text{sim}}$ can be seen in Figure~\ref{fig:unknown_hur_2}. As might be expected for $k \geq 2$, the probability of error is greater than that of $k=1$. One factor contributing to this increase is that the number of subsets grows with $k$. For example, when $k=1$ and $m=10$, there are only $10$ possibilities for $S^*$. When $k=5$, one must select $S^*$ from $252$ possible choices. Similarly, as $k$ grows, accurately estimating gains is challenging as it becomes more likely to find some combination of variables in $X_S$ that explain $Y$ (including ones that are not causal predictors of $Y$).

\begin{figure}
\centerline{\includegraphics[scale=.3]{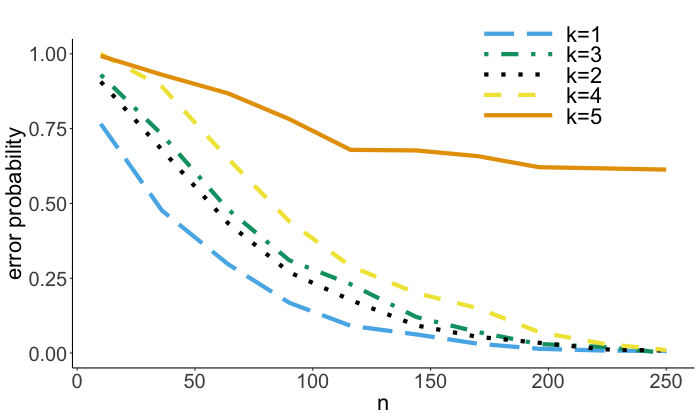}}
\caption{Error probability using $X_{\text{sim}}$ with $m = 3$ and $P = 0.1$.}
\label{fig:unknown_hur_2}
\vspace{-1.5em}
\end{figure}

As a second heuristic, for the $k=1$ case, we include a slight extension to a support recovery method proposed in \cite[Equations (19) and (20)]{jin2011limits} where codeword variance is simply replaced with sample variance. The procedure is as follows. For a codeword $j$, consider an estimate $\hat{\gamma}$ of $\gamma$ in which 
\begin{equation}
    \hat{\gamma} = \sqrt{\bigl(\hat{\sigma_j^2}\bigr)^{-1}\cdot\bigl| \frac{1}{n}||Y ||^2 - \sigma^2_z\bigr|}\;,
\end{equation}
where $\hat{\mu}_j := (1/n)\sum_{i=1}^n X_{i,j}$ is the sample mean and $\hat{\sigma_j^2}:=(1/(n-1))\sum_{i=1}^n (X_{i,j} - \hat{\mu}_j)^2$ is the sample variance. Then, declare that $\hat{S}^* = \{j\}$ if it is the unique index such that 
\begin{equation}
    \frac{1}{n}|| Y - (-1)^q\hat{\gamma}X_{\hat{S}^*} ||^2 \leq \sigma^2_z + \epsilon^2 \hat{\sigma_j^2}
\end{equation}
for all $j$ where $q$ is either $1$ or $2$ and $\epsilon$ is fixed such that it is greater than $0$. If for every $j \in \{1,\dots,k\}$ there is none that meets the above criteria, or if there are multiple, one is picked arbitrarily. Note that an extension to this method exists in \cite{jin2011limits} that allows for $k \geq 2$. However, due to the performance of the  \textsf{OLS}+\textsf{MDD} approach, we choose not to include any comparisons to this method.

When comparing the performance of this support recovery method, and the \textsf{OLS}+\textsf{MDD} method, we find \textsf{OLS}+\textsf{MDD} greatly outperforms the support recovery method when tested on $X_{\text{inter}}$ for $0 \leq a \leq 1$ (see Figure~\ref{fig:un_gain_3}). In fact, as $a$ grows, \textsf{OLS}+\textsf{MDD} behaves as \textsf{MDD} in the optimal case where the gains are assumed to be known. 

\begin{figure}
\centerline{\includegraphics[scale=.3]{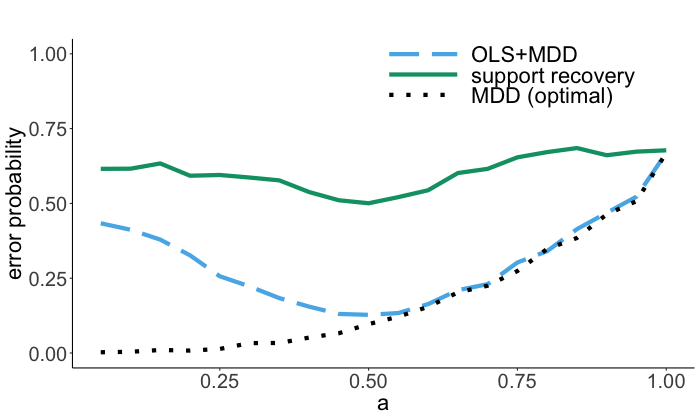}}
\caption{Error probability using $X_{\text{inter}}$ with $m = 3$, $k=1$, $P=0.1$, and $n=100$. \textsf{MDD} refers to the optimal minimum distance decoding approach where the gains are known.}
\label{fig:un_gain_3}
\vspace{-1.5em}
\end{figure}

\vspace{-0.5em}
\section{Lower Bounds: Positive-Rate Case} \label{sec::pos_rate}
\vspace{-0.5em}
As discussed previously, the bound in~\eqref{equ:ther1} refers to the zero rate case where $m$ does not grow with $n$. However, for cases in which it is permissible to allow $m$ to be large, and thus the rate to be positive, there exists several straightforward extensions derived from results in~\cite{jin2011limits}. This allows us to analyze a lower bound on the error probability for $k\geq1$. 

In particular, we examine two cases of primary interest, when differences in environment constitute shifts in mean and variance for $X^e$ being both random and fixed. For simplicity of notation, we assume the noise variance to be $1$.

\vspace{-0.25em}
\begin{corollary}\label{cor:fano1}
Consider an \textsf{ICP} setting with noise $Z \sim \mathcal{N}(0, 1)$. Suppose each predictor $X_i^e$ is independently distributed and has mean 0 and variance $1$ for environment one and mean $\mu_x$ and variance $1 + \sigma^2_d$ for environment two. For any $T\subseteq \{1,2,\dots k\}$, the error probability $P_e$ of recovering $S^*$ can be lower bounded by,
\begin{align*}
 P_e \geq \bigl( |T|&\log m - \frac{n}{4} \log  ( a + 1)((1+\sigma^2_d) a + 1)  -c_n \bigr)/b,
\end{align*}
where $a=\sum_{j\in T}\gamma_j^2$, $b=\log {m \choose k}$, and $c_{n} = \log k! + 1+n\log \bigl( m^{|T|} / \prod_{q=0}^{|T|-1} (m-(k-|T|)-q) \bigr)$.

\end{corollary}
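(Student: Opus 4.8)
\emph{Proof proposal.} The plan is to recast support recovery as a channel-coding converse and apply Fano's inequality to a genie-aided subproblem, bounding the relevant mutual information by a Gaussian maximum-entropy estimate carried out separately for each environment. As in the proof of Proposition~\ref{the:lower_b}, since \textsf{ICP} must recover $S^*$ without knowing $\gamma$ or $\sigma_z^2$, it suffices to lower bound the minimal error probability of an idealized decoder that observes $(\mathbf X^{e_1},\mathbf Y^{e_1},\mathbf X^{e_2},\mathbf Y^{e_2})$ (with $n/2$ samples per environment) and is told $\gamma$ and $\sigma_z^2=1$; I would put a uniform prior on $S^*$ over the $\binom{m}{k}$ size-$k$ subsets of $\{1,\dots,m\}$, so that the Bayes error of this model lower bounds the worst-case, hence the \textsf{ICP}, error probability.

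Next I would introduce a genie to make only the energy $a=\sum_{j\in T}\gamma_j^2$ appear. Reveal to the decoder the $k-|T|$ coordinates of the support lying outside the ``$T$-block'' together with their $X^e$ columns; subtracting their now-known contribution leaves the reduced model $\widetilde{\mathbf Y}^e=\sum_{j\in B}\gamma_j\mathbf X_j^e+\mathbf Z^e$, where the unknown part $B$ of the support has size $|T|$ and ranges over $\tfrac{1}{|T|!}\prod_{q=0}^{|T|-1}(m-(k-|T|)-q)$ candidates, with per-coordinate signal variance $a$. Since side information can only lower the error, any bound for the reduced model transfers. Tracking how this restricted count is re-expressed through $\binom{m}{k}$, together with the ordered-versus-unordered factor, is precisely what produces $b=\log\binom{m}{k}$, the $|T|\log m$ term (the logarithm of the ordered count $m^{|T|}$), and the $\log k!$ and $m^{|T|}/\prod_{q=0}^{|T|-1}(m-(k-|T|)-q)$ pieces of $c_n$; the additive $1$ in $c_n$ absorbs the slack $\log 2<1$ from Fano's binary-entropy term.

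The remaining step is the mutual-information bound. For the reduced model Fano gives $P_e\ge(\log M-I(S^*;\mathbf X,\mathbf Y)-\log 2)/\log M$ with $M$ the reduced candidate count; since $\mathbf X\independent S^*$, $I(S^*;\mathbf X,\mathbf Y)=I(S^*;\mathbf Y\mid\mathbf X)=h(\mathbf Y\mid\mathbf X)-h(\mathbf Y\mid\mathbf X,S^*)$, and conditioning on $\mathbf X$ and $S^*$ leaves only the noise, so $h(\mathbf Y\mid\mathbf X,S^*)=\tfrac n2\log(2\pi e)$. For the first term, subadditivity of differential entropy and the Gaussian maximum-entropy bound give $h(\mathbf Y\mid\mathbf X)\le\sum_i\tfrac12\log(2\pi e\,\Var(Y_i))$; the mean shift $\mu_x$ in environment two does not affect differential entropy, while $\Var(Y_i)=a+1$ for the $n/2$ environment-one samples and $\Var(Y_i)=(1+\sigma_d^2)a+1$ for the $n/2$ environment-two samples, so $I(S^*;\mathbf X,\mathbf Y)\le\tfrac n4\log\!\big((a+1)((1+\sigma_d^2)a+1)\big)$. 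Substituting this and the counting identities into the Fano bound, and weakening the denominator from $\log M$ to the larger $b=\log\binom{m}{k}$ (valid because the numerator is the only quantity that can be negative), yields the stated inequality. For the fixed-design variant flagged before the Corollary, I would replace the differential entropies by their conditional-on-$\mathbf X$ analogues; the structure is identical.

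The main obstacle is the genie construction and the accompanying bookkeeping: one must choose exactly which coordinates and columns to reveal so that the effective signal energy collapses to $a$ rather than $\|\gamma_{S^*}\|^2$, and then convert the restricted candidate count into the clean closed form $\big(|T|\log m-\tfrac n4\log[(a+1)((1+\sigma_d^2)a+1)]-c_n\big)/\log\binom{m}{k}$ while keeping track of the $n$-fold product, the $\log k!$, and the Fano slack. By contrast, the mutual-information estimate is a routine maximum-entropy computation.
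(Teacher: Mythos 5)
Your proposal follows essentially the same route as the paper: the paper's proof simply cites Theorem~2 of Jin et al.\ and modifies their Equation~(41) so that the $n$ samples split into two blocks of $n/2$ with per-sample output variances $a+1$ and $(1+\sigma_d^2)a+1$, which is exactly the genie-aided Fano argument with the Gaussian maximum-entropy bound that you reconstruct. The remaining bookkeeping (the $\log k!$, the $n\epsilon_n$ term, and passing from the reduced candidate count to the denominator $\log\binom{m}{k}$ via $P_e\log M\le P_e\log\binom{m}{k}$) is handled the same way.
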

\begin{proof}
It follows immediately from~\cite[Theorem 2]{jin2011limits} using Fano's inequality and we only outline the key differences from~~\cite[Equation~(41)]{jin2011limits}
\begin{alignat}{2}
    |T|\log m & \leq && P_e \log {m \choose k} + \frac{n}{4} \log \biggl(2\pi e \biggl(\sum_{j\in T}\gamma_j^2 + 1 \biggr) \notag\\ & && + \frac{n}{4} \log \biggl(2\pi e\biggl( \biggl( 1+\sigma^2_d \biggr) \sum_{j\in T}\gamma_j^2 + 1 \biggr) \biggr) \notag\\
    & && - \frac{1}{2}  \log k! + \log(2\pi e) + 1 + n\epsilon_{n}, \notag
\end{alignat}
where $\epsilon_{n} = \frac{1}{n}\log \bigl( m^{|T|} / \prod_{q=0}^{|T|-1} (m-(k-|T|)-q) \bigr)$. 
\end{proof}

Now by considering a deterministic $X^e$, the next result follows similarly and we omit the proof due to space limit. 

\vspace{-0.25em}
\begin{corollary}\label{cor:fano2}
In an \textsf{ICP} setting, suppose the set of  predictors $X^e$ is a deterministic matrix in $\mathbb{R}^{n\times m}$ where each column $X^e_j$ for $j \in \{1,\dots m\}$ obeys the energy constraint $\sum_i x_{i,j}^2 \leq 1 $. 
Assume that, after being generated, the second $n/2$ rows of $X^e$ are shifted by $\mu_d$. i.e., for $i > n/2+1$, $X^e_{i,j} = x_{i,j} + \mu_d$. Then, for any $T\subseteq \{1,2,\dots k\}$, the error probability $P_e$ of recovering $S^*$ can be lower bounded by, 
\begin{align*}
 P_e \geq \bigl[ |T|&\log m - \frac{n}{4} \log  \bigl( \bigl(a + \tau(m)\bigr) + 1\bigr) \\
 &- \frac{n}{4} \log\bigl( \eta(\mu_d) \bigl(a + \tau(m)\bigr) + 1\bigr) - c_n \bigr] \big/ b,
\end{align*}
where $(a,b,c_n)$ are defined the same as before in Corollary~\ref{cor:fano1},  $\tau(m) = \sum_{j\in T}\sum_{i \in T} \frac{\gamma_j\gamma_i}{m-1}$, and $\eta(m)=(1 + \mu^2_d + \frac{2\mu_d}{nm}\sum_{i=1}^n\sum_{\ell=1}^m x_{i,\ell} )$.
\end{corollary}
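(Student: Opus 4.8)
The plan is to mirror the argument for Corollary~\ref{cor:fano1}, which itself specializes \cite[Theorem~2]{jin2011limits} via Fano's inequality, but now to track the effect of the deterministic row-shift on the relevant covariance (and hence differential-entropy) terms. First I would set up the standard Fano chain: treating the true support $S^*$ as uniform over the ${m \choose k}$ admissible subsets, conditioning on $S^*$ and the received vector $Y^e$, and using that $H(S^* \mid Y^e) \le P_e \log{m\choose k} + 1$ gives the basic inequality $|T|\log m \le P_e\log{m\choose k} + I(\,\cdot\,;Y^e) + (\text{small-order terms})$, exactly as in \cite[Equation~(41)]{jin2011limits}. The combinatorial bookkeeping producing the $\log k!$, the $\log(2\pi e)$, and the $n\epsilon_n = n\log\bigl(m^{|T|}/\prod_{q=0}^{|T|-1}(m-(k-|T|)-q)\bigr)$ contributions is identical to the random-$X^e$ case and gets absorbed into $c_n$; no change is needed there.

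The substantive step is replacing the two differential-entropy bounds $\frac{n}{4}\log\bigl(2\pi e(a+1)\bigr)$ and $\frac{n}{4}\log\bigl(2\pi e((1+\sigma_d^2)a+1)\bigr)$ from Corollary~\ref{cor:fano1} by the corresponding quantities for the deterministic, shifted design. Here $Y^e = X^e_{S^*}\gamma_{S^*} + Z^e$ with $Z^e\sim\mathcal N(0,I)$, so the mutual-information term is controlled by $\frac12\log\det(\text{Cov})$ where the covariance is (per environment) of the form $\sigma^2_{\text{signal}} I + I$. With the column energy constraint $\sum_i x_{i,j}^2\le 1$, the signal energy from the true predictors is $\sum_{j\in T}\gamma_j^2\sum_i x_{i,j}^2$ plus cross terms $\sum_{j\neq i\in T}\gamma_j\gamma_i\langle X_j,X_i\rangle$; bounding the latter by the worst-case inner product $\tfrac{1}{m-1}$ (the value for equiangular unit vectors, i.e.\ the simplex geometry of Section~\ref{sec:simplex}) yields the $\tau(m)=\sum_{j\in T}\sum_{i\in T}\frac{\gamma_j\gamma_i}{m-1}$ correction. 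For the second $n/2$ rows, substituting $X^e_{i,j}=x_{i,j}+\mu_d$ rescales the second-moment sum: expanding $\sum_i(x_{i,j}+\mu_d)^2$ and aggregating across columns gives a multiplicative factor $\eta(\mu_d) = 1+\mu_d^2 + \tfrac{2\mu_d}{nm}\sum_{i,\ell}x_{i,\ell}$ on the signal energy $a+\tau(m)$ — which is precisely the second log-term in the statement. Splitting the $\frac{n}{2}$-sample likelihood into the two environments then produces the two $\frac{n}{4}\log(\cdots)$ terms; I would note that the $2\pi e$ constants from Corollary~\ref{cor:fano1} can be dropped into $c_n$ or are subsumed by keeping only the dominant $\frac{n}{4}\log(\cdot)$ scaling, consistent with how the statement is written.

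The main obstacle I anticipate is getting the cross-term bound and the shift-induced factor to be simultaneously valid: the inner-product bound $\langle X_j,X_i\rangle \le \tfrac{1}{m-1}$ must be compatible with the energy budget after the shift is applied, since shifting the lower rows changes both the norms and the inner products of the columns. I would handle this by applying the bound to the \emph{generated} (pre-shift) design — where $\sum_i x_{i,j}^2\le 1$ holds by hypothesis and the equiangular configuration is extremal — and then treating the shift as a deterministic additive perturbation whose contribution to each column's energy is exactly $\mu_d^2\cdot\frac{n}{2}$ plus the measured cross term $2\mu_d\sum_i x_{i,j}$, aggregated into $\eta(\mu_d)$ after averaging over columns. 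The remaining steps (verifying the direction of each inequality, confirming the $c_n$ terms match, and checking the $|T|$-dependence) are routine and parallel to \cite{jin2011limits}, so I would state them briefly and omit the detailed algebra, as the corollary itself indicates.
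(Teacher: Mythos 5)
The paper gives no proof of this corollary (it states only that the result ``follows similarly'' to Corollary~\ref{cor:fano1} and omits the argument), so your reconstruction can only be judged on its own terms. Your overall skeleton --- Fano's inequality over the uniform prior on ${m\choose k}$ supports, the same combinatorial terms absorbed into $c_n$, and a per-environment $\frac{n}{4}\log(1+\text{signal energy})$ bound on the mutual information --- is exactly the route the paper indicates, and that part is fine.

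The gap is in the two ingredients that actually distinguish this corollary from Corollary~\ref{cor:fano1}, namely $\tau(m)$ and $\eta(\mu_d)$, and neither is derived correctly. For $\tau(m)$: you justify the factor $\frac{1}{m-1}$ by calling it the ``worst-case inner product'' for equiangular unit vectors, but under the stated hypothesis (only the per-column energy constraint $\sum_i x_{i,j}^2\le 1$) the worst-case inner product between two columns is $1$, not $\frac{1}{m-1}$; with only Cauchy--Schwarz the cross terms contribute up to $\sum_{i\neq j}|\gamma_i||\gamma_j|$, which would destroy the claimed bound. The $\frac{1}{m-1}$ can only arise either from an explicit codebook assumption (not present in the statement) or from averaging the pairwise inner products over the uniformly random support $S^*$ inside the Fano argument, where $\Es\,\langle X_J,X_I\rangle$ over distinct random indices is controlled by $\frac{1}{m(m-1)}\bigl(\|\sum_j X_j\|^2-\sum_j\|X_j\|^2\bigr)$; that averaging step is the missing idea, and it is a different mechanism from the extremal-configuration argument you give. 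For $\eta(\mu_d)$: your own expansion of $\sum_{i>n/2}(x_{i,j}+\mu_d)^2$ produces a term $\frac{n}{2}\mu_d^2$ and a column-specific cross term $2\mu_d\sum_{i>n/2}x_{i,j}$, whereas the stated $\eta(\mu_d)=1+\mu_d^2+\frac{2\mu_d}{nm}\sum_{i,\ell}x_{i,\ell}$ has no factor of $n/2$ on $\mu_d^2$ and uses the grand mean over all $nm$ entries; ``aggregating after averaging over columns'' does not reconcile these normalizations, so you have not actually verified that your bound reproduces the claimed expression. Since the corollary's content relative to Corollary~\ref{cor:fano1} is precisely these two quantities, the proposal as written does not establish the statement.
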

\vspace{-1.5em}

\bibliographystyle{unsrt}
\bibliography{refs}
\balance

\vspace{-0.5em}
%\bibliographystyle{IEEEbib}
%\bibliography{strings,refs}

\end{document}